\algnewcommand{\LineComment}[1]{\State \(\triangleright\) #1}
\newcommand{\comm}{\textsf{{Comm}}}
\newtheorem{definition}{Definition}
\newtheorem{theorem}{Theorem}
\newtheorem{lemma}{Lemma}
\newtheorem{corollary}{Corollary}
\begin{document}

\title{Improved Byzantine Agreement under an Adaptive Adversary}

\author{Fabien Dufoulon}
\orcid{0000-0003-2977-4109}
\affiliation{
\institution{Lancaster University}
\city{Lancaster}
\country{UK}
}
\email{f.dufoulon@lancaster.ac.uk}

\author{Gopal Pandurangan}
\orcid{0000-0001-5833-6592}
\affiliation{
  \institution{ University of Houston}
  \city{Houston, TX 77204}
  \country{USA}
}
\email{gopal@cs.uh.edu}

\date{}

\thispagestyle{empty}

\begin{abstract}
Byzantine agreement is a fundamental problem in fault-tolerant distributed computing that has been studied intensively for the last four decades. Much of the research has focused on a static Byzantine adversary, where the adversary is constrained
to choose the Byzantine nodes in advance of the protocol's execution. This work focuses on the harder case of an adaptive Byzantine adversary that can choose the Byzantine nodes \emph{adaptively} based on the protocol's execution.
While efficient $O(\log n)$-round protocols ($n$ is the total number of nodes) are known for the static adversary (Goldwasser, Pavlov, and Vaikuntanathan, FOCS 2006) tolerating up to $t < n/(3+\epsilon)$ Byzantine nodes, $\Omega(t/\sqrt{n \log n})$ rounds is a well-known lower bound for adaptive adversary [Bar-Joseph and Ben-Or, PODC 1998]. The best-known protocol for adaptive adversary runs in $O(t/\log n)$ rounds [Chor and Coan, IEEE Trans. Soft. Engg., 1985]. 

This work presents a synchronous randomized Byzantine agreement protocol under an adaptive adversary that improves over previous results. Our protocol works under the powerful {\em adaptive rushing adversary in the full information model}.  That is, we assume that the Byzantine nodes can behave arbitrarily and maliciously, have knowledge about the entire state of the network at every round, including random choices made by all the nodes up to and including the current round, have unlimited computational power, and may collude among themselves.  Furthermore, the adversary can {\em adaptively} corrupt up to $t < n/3$ nodes based on the protocol's execution.
We present a simple randomized Byzantine agreement protocol that runs in $O(\min\{t^2\log n/n, t/\log n\})$ rounds  that improves over the  long-standing bound of $O(t/\log n)$ rounds due to Chor and Coan [IEEE Trans. Soft. Engg., 1985]. Our bound is significantly better than that of Chor and Coan for $t = o(n/\log^2 n)$ and approaches the Bar-Joseph and Ben-Or [PODC 1998] lower bound of $\Omega(t/\sqrt{n\log n})$ rounds when $t$ approaches $\sqrt{n}$. 
\end{abstract}

\maketitle

\section{Introduction}
\label{sec:introduction}
Byzantine Agreement has been a central problem in distributed computing since it was introduced by the seminal work of  Pease, Shostak and Lamport \cite{Pease_1980}  in the early 1980s, and has been studied intensively for five decades, see e.g., 
 \cite{lynch, attiya, Dolev_1982, Rabin_1983, Ben-Or_1983, CC85,  Feldman_1997, Goldwasser_2006, Ben-Or_2006, King_2006_SODA, Kapron_2010, King_2011, Augustine_2020_DISC, saia, pettie}. The Byzantine agreement problem can be stated as follows.

 \begin{definition}[Byzantine agreement]
Let $P$ be a protocol on a distributed network of $n$ nodes in which each node $v$ starts with an input bit value $b_v$. A Byzantine adversary controls up to $t$ nodes, called Byzantine (or faulty), which can deviate arbitrarily from $P$. Protocol $P$ solves Byzantine agreement  if each (honest) node $v$ running $P$ \emph{terminates} and  outputs a value $o_v$ at the end of $P$ such that:
\begin{description}
\item [Agreement:] For any two honest nodes $u$ and $v$, $o_u = o_v$.
\item [Validity:] If the input value for all (honest) nodes is $b$, then the output value for all honest nodes
should be $b$.
\end{description}
\end{definition}

\noindent Byzantine agreement in distributed (message-passing) networks\footnote{As is standard in all prior works discussed in this paper, we assume point-to-point communication between all pairs of nodes, i.e., a complete network of $n$ nodes (cf. Section \ref{sec:model}).} has been studied extensively under various settings  (see also Section \ref{sec:related}). Some of the important ones are as follows: 
\begin{description}
\item [Deterministic versus Randomized Protocols:]  The protocol can be {\em deterministic} or {\em randomized} (where nodes have access to private or shared random bits). In randomized protocols (considered in this paper), agreement and/or performance (time, communication) guarantees are probabilistic.
\item [Full Information versus Weaker Adversarial Models:] In the setting of  {\em private channels} it is assumed that players can communicate using pairwise private communication channels which are not known to the Byzantine nodes. In a {\em computationally bounded} model the Byzantine nodes are assumed to be computationally bounded, and cryptographic primitives are assumed to exist. Finally, in the {\em full information model} (which is the focus of this paper), which is the most powerful adversarial model, no assumptions are made on the existence of private channels, nor are Byzantine nodes computationally bounded. Furthermore, in the full information model for randomized protocols, it is typically assumed that  Byzantine nodes know the random choices of all the honest nodes made till the previous round (and not future rounds). If they also know the random choices of the honest nodes in the \emph{current round} before they act, then
the adversary is called {\em rushing}. In other words, a rushing adversary in round $r$ can act based on random choices made by all honest nodes till round $r$, whereas a non-rushing can act only based on random choices made by all honest nodes till round $r-1$. 
\item [Static versus Adaptive Byzantine Adversary:] {\em Static} adversary, which is constrained to choose the Byzantine nodes in advance of the protocol's execution, and {\em adaptive} adversary, where the Byzantine nodes can be chosen
during the execution of the protocol based on the states of the nodes at any time.
\item [Synchronous versus Asynchronous Communication:] The underlying network communication model can be {\em synchronous} (as assumed here) or {\em asynchronous}.
\end{description}

We note that many works have assumed a {\em static} Byzantine adversary (discussed below). Our current work focuses on the significantly harder case of {\em adaptive} and {\em rushing} Byzantine adversary under the {\em full information model} in the {\em synchronous} setting.\footnote{Under the same assumptions, the asynchronous setting is even harder, and getting even polynomial (in $n$) round algorithms is hard --- cf. Section \ref{sec:related}.} Note that the full-information model has received much attention since protocols that work under this model \emph{do not rely on cryptographic assumptions} and can work without computational restriction assumptions. We refer to the   work of Chor and Coan \cite{CC85} for a discussion on early work on the Byzantine agreement in the full information model versus otherwise,  static versus adaptive adversary, deterministic versus randomized protocols, and asynchronous versus synchronous communication.  We note that for \emph{deterministic} protocols, there is a well-known lower bound of $t+1$ rounds (where $t$ is the number of Byzantine nodes) \cite{FischerL82} and $O(t)$-round deterministic protocols are known \cite{lamport82,Dolev_1982,garay}. For these protocols, while that of \cite{lamport82} has exponential communication complexity, the protocols of \cite{Dolev_1982,garay,kowalskibyz} have polynomial communication.

There is no difference between a static and adaptive adversary for deterministic protocols, as the entire execution is determined at the beginning of the protocol. However, the difference is essential for \emph{randomized} protocols: the \emph{static} adversary is oblivious to the random choices made during the protocol's execution; in contrast, the \emph{adaptive} adversary can choose to corrupt nodes
based on random choices till the current round. It is important to note that in the full information model, the \emph{maximum} number of Byzantine nodes that can be tolerated is $t < n/3$, even for a static adversary in the synchronous setting and even for randomized protocols \cite{FLM,yao}. On the other hand, (even) deterministic protocols that tolerate up to this limit are known (e.g., \cite{lamport82,Dolev_1982}).

For the {\em static} adversary, after a long line of research (see e.g., \cite{Goldwasser_2006, Ben-Or_2006} and the references therein) very efficient Byzantine agreement protocols --- taking  $O(\log n)$-rounds --- are known tolerating up to near-optimal bound of $t < n/(3+\epsilon)$ Byzantine nodes \cite{Goldwasser_2006} in the {\em full-information} model. In particular, the 2006 work of Goldwasser, Pavlov, and Vaikuntanathan \cite{Ben-Or_2006} points out  that their $O(\log n)$-round randomized protocol is a significant improvement over the 1985 work of Chor and Coan \cite{CC85} which presented a $O(t/\log n)$ round {\em randomized} Byzantine agreement protocol (which itself was an improvement over the $O(t)$-round deterministic protocol of \cite{Dolev_1982}). However, it must be pointed out that while Chor and Coan works under an {\em adaptive} adversary (though non-rushing\footnote{It is easy to make Chor and Coan's protocol work under a \emph{rushing} adaptive adversary, using an idea similar to our protocol in Section \ref{sec:protocol}.}), the Goldwasser, Pavlov, and Vaikuntanathan protocol works under the easier, weaker \emph{static} (rushing) adversary.

Indeed, the adaptive (rushing) adversary is much more powerful since there exists an (expected) $\Omega(t/\sqrt{n\log n})$ round lower bound for adaptive adversary  (which holds even for adaptive \emph{crash faults} and even for \emph{randomized} algorithms) shown by Bar-Joseph and Ben-Or \cite{BB98}.  This high lower bound is often cited as the reason why many works assume the static adversary \cite{Ben-Or_2006}. The best-known protocol for the adaptive adversary takes expected $O(t/\log n)$ rounds  (and tolerating up to $t < n/3$ Byzantine nodes) due to Chor and Coan \cite{CC85}, which has stood for four decades. We note that the Chor and Coan bound is only a logarithmic factor better than the best possible deterministic protocols that run in $O(t)$ rounds (e.g., \cite{Dolev_1982,garay}), but it showed, for the first time, that randomization can break the $t+1$ deterministic lower bound barrier. In this paper, we present a Byzantine agreement protocol for the adaptive adversary that significantly improves the Chor and Coan bound~\cite{CC85}.

\subsection{Model}
\label{sec:model}

This work focuses on the challenging adversarial model of an {\em adaptive}
Byzantine adversary that can choose which nodes to be Byzantine based on the protocol's execution. We assume the {\em full-information rushing} model where the Byzantine nodes can behave arbitrarily and maliciously, have knowledge about the entire state of the network at every round, including random choices made by all the nodes up to and including the current round, have unlimited computational power, and may collude among themselves. Note that in this model, cryptographic techniques do not apply. Furthermore, protocols that work under this model (like ours)
are tolerant to quantum attacks.

As is standard in all the results cited in this paper, we assume point-to-point communication between all pairs of nodes, i.e., a {\em complete} network of $n$ nodes. The Byzantine adversary can adaptively corrupt up to $t$ nodes (in total) during the protocol's execution. Our protocol can tolerate up to $t < n/3$ Byzantine nodes, which is the best possible fault tolerance in the full-information model \cite{FLM,yao}. 

Communication is {\em synchronous} and occurs via message passing, i.e., communication proceeds in discrete rounds by exchanging messages; every node can communicate directly with every other node. We assume a CONGEST model, where each node has only limited bandwidth, i.e., only $O(\log n)$ bits can be sent per edge per round.  As is standard in Byzantine agreement (see e.g., Lamport et al. \cite{Pease_1980}), we assume that the receiver of a message across an edge in $G$ knows the identity of the sender, i.e., if $u$ sends a message to $v$ across edge $(u,v)$, then $v$ knows the identity of $u$; also the message sent across an edge is delivered correctly. Thus, we assume that each node has a (unique) ID that is known to all nodes (if not, it can be learned in one round of communication).

\subsection{Our Main Result} \label{sec:result}

We present a randomized Byzantine agreement protocol under an adaptive adversary in the full information model, which significantly improves upon previous results. Our protocol (cf. Algorithm \ref{alg:ByzantineAgreementClique}) achieves Byzantine agreement with high probability and runs in $O(\min\{t^2\log n/n, t/\log n\})$ rounds in the CONGEST model (cf. Theorem \ref{thm:ByzantineAgreementClique}).\footnote{It is easy to modify our protocol so that Byzantine agreement is always reached but in $O(\min\{t^2\log n/n, t/\log n\})$ expected rounds --- cf. Section \ref{subsec:committeeBA}.} Our runtime significantly improves over the long-standing result of Chor and Coan \cite{CC85} that presented a randomized protocol running in (expected) $O(t/\log n)$ rounds (which is only a $O(\log n)$ factor better than deterministic protocols that take $O(t)$ rounds\cite{Dolev_1982,garay}).  Our protocol (like Chor and Coan) has optimal resilience as it can tolerate up to $t < n/3$ Byzantine nodes. However, our running time is significantly better than that of Chor and Coan for $t = o(n/\log^2 n)$. More precisely, our protocol's bound strictly improves on that of Chor and Coan \cite{CC85} when $t = o(n / \log^2 n)$, and (asymptotically) matches their runtime for $O(n / \log^2 n) \leq t < n/3$. For example, when $t = n^{0.75}$, our protocol takes $O(n^{0.5}\log n)$ rounds whereas Chor and Coan's bound is $O(n^{0.75}/\log n)$. The message complexity of our protocol is $O(\min\{nt^2\log n, n^2t/\log n\})$, which also improves over Chor and Coan. Furthermore, the local computation cost of our protocol is small (linear in $n$) and the amount of randomness used per node is constant. Our protocol can also terminate {\em early} as soon as agreement is reached. In particular, if there are only $q < t$ nodes that the Byzantine adversary corrupts, then the protocol will terminate in $O(\min\{q^2\log n/n, q/\log n\})$ rounds.

Our protocol's round complexity approaches the well-known lower bound of $\Omega(t/\sqrt{n \log n})$ (expected) rounds due to Bar-Joseph and Ben-Or \cite{BB98} (cf. Theorem \ref{thm:lowerBoundClique}) when $t$ approaches $\sqrt{n}$. In particular, when $t = \sqrt{n}$, it matches the lower bound within logarithmic factors, and thus, our protocol is near-optimal (up to logarithmic factors).  
We \emph{conjecture} that our protocol's round complexity is optimal (up to logarithmic factors) for all $t < n/3$.

Our protocol (cf. Section \ref{sec:protocol}) is a variant of the classic randomized protocol of Rabin \cite{Rabin_1983} as is the protocol of Chor and Coan \cite{CC85}. Rabin's protocol assumes a shared (common) coin available to all nodes (say, given by a trusted external dealer). Chor and Coan present a simple method to generate common coins by the nodes themselves without needing an external dealer. The main modification of our protocol is a more efficient way to generate shared coins using the fact that one can group nodes into committees of appropriate size to generate a common coin. Our protocol crucially makes use of the fact that even if the Byzantine adversary is adaptive, it cannot prevent the generation of a common coin (cf. Definition \ref{def:ccoin}) if the number of Byzantine nodes is at most a {\em square root} of the total number of nodes (cf. Theorem \ref{th:cc}). We show this fact using an anti-concentration inequality due to Paley and Zygmund (cf. Lemma \ref{ineq:pz}).

\subsection{Additional Related Work} \label{sec:related}

The literature on Byzantine agreement is vast (see e.g., \cite{saia,pettie,Wattenhofer_2019_Book}), and we limit ourselves to the most relevant to this work. As mentioned earlier, the best-known bound for Byzantine agreement under an adaptive adversary is a long-standing result of Chor and Coan \cite{CC85} who give a randomized protocol that finishes in expected $O(t/\log n)$ rounds and tolerates up to $t < n/3$ Byzantine nodes. We note that this protocol assumes a non-rushing adversary (though this can also be modified to work for rushing). 

The work of Augustine, Pandurangan, and Robinson \cite{Augustine_2013_PODC}  gives a protocol
for Byzantine agreement in  {\em dynamic} and {\em sparse expander} networks that can tolerate $O(\sqrt{n}/\text{polylog}(n))$ Byzantine nodes.  We note that this setting differs from the one considered here; the agreement protocol in \cite{Augustine_2013_PODC} also differs and is based on a {\em sampling majority} protocol. In the sampling majority protocol, in each round, each node samples values from two random nodes and takes the majority of its value and the two sampled values; this is shown to converge to a common value in $\text{polylog}(n)$ rounds if the number of Byzantine nodes is $O(\sqrt{n}/\text{polylog}(n))$.   We note that our common coin
protocol (Algorithm \ref{alg:complete-cc}) and the sampling majority protocol both use an anti-concentration bound in their analysis.

We note that all the above results are for {\em synchronous} networks.
There has also been work on the even harder case of adaptive adversary under {\em asynchronous} networks, where an adversary can arbitrarily delay messages sent by honest nodes (in contrast to synchronous networks, where each message
is delivered in 1 round). This model was studied starting from the seminal works of Ben-Or \cite{Ben-Or_1983} and Bracha\cite{bracha}, who gave {\em exponential} round algorithms that tolerate up to $t < n/3$ Byzantine nodes. The work of King and Saia \cite{saia} gave the first (expected) polynomial time algorithm for this model that tolerated $\epsilon n$  Byzantine nodes (for some small constant $\epsilon$). Recently, Huang, Pettie, and Zhu \cite{pettie} (see also \cite{pettie1}) improved the resilience to close to $n/3$ while running in a polynomial number of rounds. We note all the above polynomial run-time bounds are pretty large; in particular, the algorithm of \cite{pettie} takes $O(n^4)$ rounds to achieve resilience close to $n/3$.

\section{Preliminaries}
\label{sec:prelim}

\subsection{Anti-concentration Inequality on Random Variables}

We provide an anti-concentration inequality --- the Paley-Zigmund inequality --- that we will use for our common coin protocol in Section \ref{subsec:commonCoin1Round}.

\begin{lemma}[Paley-Zygmund Inequality\cite{PZ,Steele}]
\label{ineq:pz}
If $X \geq 0$ is a random variable with finite variance and  if  $0 \leq \theta \leq 1$, then
$$\Pr (X > \theta E[X]) \geq (1-\theta)^2 \frac{E[X]^2}{E[X^2]}.$$
\end{lemma}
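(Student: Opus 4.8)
The plan is to prove this classical inequality by decomposing $E[X]$ according to whether $X$ exceeds the threshold $\theta E[X]$, bounding each of the two resulting pieces, and rearranging. First I would write
\[
E[X] = E\left[X\,\mathbf{1}_{\{X \le \theta E[X]\}}\right] + E\left[X\,\mathbf{1}_{\{X > \theta E[X]\}}\right].
\]
For the first term, since $X \ge 0$ we have $\theta E[X] \ge 0$, and on the event $\{X \le \theta E[X]\}$ the integrand is at most $\theta E[X]$; hence $E\left[X\,\mathbf{1}_{\{X \le \theta E[X]\}}\right] \le \theta E[X]\,\Pr(X \le \theta E[X]) \le \theta E[X]$.

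For the second term I would apply the Cauchy--Schwarz inequality to the product $X \cdot \mathbf{1}_{\{X > \theta E[X]\}}$, using $\mathbf{1}^2 = \mathbf{1}$:
\[
E\left[X\,\mathbf{1}_{\{X > \theta E[X]\}}\right] \le \sqrt{E[X^2]}\,\sqrt{\Pr(X > \theta E[X])}.
\]
Combining the two estimates yields $E[X] \le \theta E[X] + \sqrt{E[X^2]\,\Pr(X > \theta E[X])}$, i.e. $(1-\theta)E[X] \le \sqrt{E[X^2]\,\Pr(X > \theta E[X])}$. Because $1-\theta \ge 0$ and $E[X] \ge 0$, both sides are nonnegative, so squaring is legitimate; dividing by $E[X^2]$ then gives exactly the claimed bound
\[
\Pr(X > \theta E[X]) \ge (1-\theta)^2\,\frac{E[X]^2}{E[X^2]}.
\]

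There is no genuinely hard step here — the argument is essentially the decomposition plus one application of Cauchy--Schwarz — so the only things that need care are the hypotheses and degenerate cases. Finiteness of the variance ensures $E[X^2] < \infty$, and together with $X \ge 0$ this also makes $E[X]$ finite, so every quantity above is well defined; if $E[X^2] = 0$ then $X = 0$ almost surely and the statement is trivial, so one may assume $E[X^2] > 0$ before dividing. I would also note explicitly that nonnegativity of $X$ is used twice (to bound the first term and to justify squaring), and that the boundary case $\theta = 1$ is consistent, since the right-hand side then collapses to $0$.
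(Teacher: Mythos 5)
Your proof is correct: the decomposition of $E[X]$ over the events $\{X \le \theta E[X]\}$ and $\{X > \theta E[X]\}$, the bound $\theta E[X]$ on the first piece, Cauchy--Schwarz on the second, and the squaring/rearrangement step are all valid, and you handle the degenerate case $E[X^2]=0$ and the boundary $\theta=1$ appropriately. Note that the paper does not prove this lemma at all --- it is imported as a known result with citations --- so there is no in-paper argument to compare against; what you have written is the standard textbook derivation of the Paley--Zygmund inequality and would serve as a self-contained justification of the cited statement.
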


\subsection{Byzantine Agreement Lower Bound}
\label{subsec:lowerBound}

 In $n$-node networks, a well-known result from Bar-Joseph and Ben-Or \cite{BB98} provides an $\Omega(t / \sqrt{n \log n})$ runtime lower bound for Byzantine agreement against an adaptive full-information rushing crash fault adversary that can control up to $t < n/3$ nodes. This lower bound result clearly also applies to an adaptive full-information rushing Byzantine adversary that can control up to $t < n/3$ nodes (see Theorem \ref{thm:lowerBoundClique}).

\begin{theorem}[\cite{BB98}]
\label{thm:lowerBoundClique}
Given a $n$-node network, of which at most $t < n/3$ is controlled by an adaptive full-information rushing Byzantine adversary, any algorithm solving Byzantine agreement takes $\Omega(t / \sqrt{n \log n})$ rounds.
\end{theorem}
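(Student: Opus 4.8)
The plan is to obtain Theorem~\ref{thm:lowerBoundClique} as an immediate consequence of the crash-fault lower bound of Bar-Joseph and Ben-Or~\cite{BB98}, by observing that a rushing adaptive \emph{Byzantine} adversary is at least as powerful as a rushing adaptive \emph{crash} adversary. Concretely, a crash fault is the special case of Byzantine behavior in which a corrupted node, from the moment of corruption onward, simply stops sending any messages (equivalently, behaves like a dead node); this is a legal strategy for a Byzantine adversary. Hence any protocol $P$ that solves Byzantine agreement against an adaptive full-information rushing Byzantine adversary controlling up to $t < n/3$ nodes, in particular solves consensus (Agreement, Validity, and Termination) against the restricted adversary that only ever crashes its at most $t$ corrupted nodes, while still wielding full-information and rushing power and choosing whom to crash adaptively. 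This is exactly the setting for which \cite{BB98} proves an $\Omega(t/\sqrt{n\log n})$ (expected) round lower bound, and that result covers the regime $t < n/3$. Applying the bound to $P$ run in this crash-only mode yields the claimed $\Omega(t/\sqrt{n\log n})$ lower bound against the Byzantine adversary.

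For completeness I would also recall the skeleton of the Bar-Joseph--Ben-Or argument underlying that lower bound, in case a self-contained statement is preferred. The adversary proceeds round by round, maintaining a distribution over executions in which the honest nodes have not yet safely decided. In a given round one identifies the set of nodes whose current-round randomness and messages most influence the eventual decision value; using the rushing power, the adversary observes this round's coin flips before acting. A martingale (Azuma-type) concentration argument shows that the aggregate ``swing'' on the eventual decision producible by adaptively crashing a suitably chosen set of $\Theta(\sqrt{n\log n})$ nodes is, with constant probability, large enough to keep the configuration undecided for at least one more round. Charging $O(\sqrt{n\log n})$ crashes per round against the total fault budget $t$, the protocol cannot have all honest nodes safely decided before $\Omega(t/\sqrt{n\log n})$ rounds with constant probability, which also gives the bound in expectation.

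The only real obstacle is model bookkeeping: one must verify that the adversary in \cite{BB98} is exactly adaptive, full-information, and rushing (it is), that its notion of ``consensus'' coincides, in the crash specialization, with the Agreement/Validity/Termination requirements of Byzantine agreement (it does), and that its lower bound is stated for expected running time (equivalently, holds with constant probability), matching the phrasing of Theorem~\ref{thm:lowerBoundClique}. Reproducing the concentration step in full would essentially re-derive the \cite{BB98} proof, so I would simply invoke it after spelling out the ``crash is a special case of Byzantine'' reduction above.
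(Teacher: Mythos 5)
Your proposal is correct and matches the paper's treatment: the paper simply cites the crash-fault lower bound of Bar-Joseph and Ben-Or \cite{BB98} and notes it ``clearly also applies'' to an adaptive full-information rushing Byzantine adversary, which is exactly your crash-is-a-special-case-of-Byzantine reduction. Your additional sketch of the underlying martingale argument goes beyond what the paper includes but is not needed, since the theorem is invoked as an external result.
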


\section{A Byzantine Agreement Protocol}
\label{sec:protocol}

We present a new randomized Byzantine agreement protocol (Algorithm \ref{alg:ByzantineAgreementClique}) in synchronous (complete) networks under an adaptive (full information rushing) adversary that improves over the longstanding result of Chor and Coan \cite{CC85}. More precisely, the runtime of Algorithm \ref{alg:ByzantineAgreementClique} strictly improves on that of Chor and Coan \cite{CC85} when $t < n / \log^2 n$, and matches their runtime for $n / \log^2 n \leq t < n/3$. Moreover, the smaller the $t$, the more significant the improvement and the runtime approaches the lower bound of Bar-Joseph and Ben-Or \cite{BB98} when $t$ approaches $\sqrt{n}$, at which point it becomes asymptotically optimal (up to a logarithmic factor).

\begin{restatable}{theorem}{firstMainResult}
\label{thm:ByzantineAgreementClique}
    Given a $n$-node network, of which at most $t < n/3$ nodes are Byzantine, Algorithm \ref{alg:ByzantineAgreementClique} solves Byzantine agreement with high probability in $O(\min\{(t^2 \log n)/n, t/ \log n \})$ rounds. Furthermore, if only $q < t$ nodes are corrupted by the adversary, then the algorithm will terminate (early) in
    $O(\min\{(q^2 \log n)/n, q/ \log n \})$ rounds.
\end{restatable}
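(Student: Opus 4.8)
The plan is to follow the classical Rabin-style reduction: reduce Byzantine agreement to repeated executions of a \emph{weak} agreement phase that is boosted by a common coin, and show that each phase succeeds (i.e.\ all honest nodes terminate with a common output) with constant probability, so that after $O(\log n)$ iterations all honest nodes have agreed with high probability. Concretely, I would first describe the basic round structure: each node broadcasts its current bit, counts the received votes, and — depending on whether it sees a large supermajority, a bare majority, or neither — either adopts the supermajority value, keeps its value, or replaces it with the output of a common coin. The standard invariants to establish are (i) \textbf{persistence}: if at the start of a phase all honest nodes hold the same bit $b$ (because of Validity, or because a previous phase succeeded), then they all keep $b$ — this uses $t < n/3$ so that honest nodes always form a $>2/3$-supermajority and the Byzantine nodes cannot manufacture a conflicting supermajority; and (ii) \textbf{convergence with constant probability}: in any phase, the honest nodes fall into at most the two cases ``adopt $b$'' and ``take the coin'', and since the common coin equals any fixed value with probability $\geq$ some constant, with constant probability the coin matches $b$ and the phase succeeds. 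The subtle point here, as in all full-information protocols, is that a \emph{rushing} adversary sees the coin before honest nodes act; the protocol must be arranged (via the committee structure of Section~\ref{sec:protocol}) so that the coin is only revealed after the adversary has committed its behavior for that phase, or so that corrupting the relevant committee members after the fact does not help.

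The crux of the argument — and the place where this paper departs from Chor and Coan — is the common coin construction (Algorithm~\ref{alg:complete-cc}) and its guarantee, Theorem~\ref{th:cc}: a common coin can be produced in $O(1)$ rounds provided the number of Byzantine nodes is at most $O(\sqrt{n})$ (up to logarithmic factors). I would invoke this theorem essentially as a black box in the main proof. Then the round count follows by a case split on $t$. When $t = O(\sqrt{n}/\,\mathrm{polylog})$, or more generally when we can run the whole network as committees each of size $\Theta(n/t^{2})$-ish so that each committee has a $\sqrt{\text{committee size}}$ slack against the $t$ corruptions, we get an $O(1)$-round (or $O(\mathrm{polylog})$-round) common coin per phase, hence $O(\log n)$ phases total — but we must pay for the fact that when $t$ is larger than $\sqrt n$ the committees must be larger and we need several rounds to aggregate, giving the $t^{2}\log n / n$ term; and we always have the fallback of Chor and Coan's coin, giving the $t/\log n$ term, so the bound is the minimum of the two. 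The running-time bookkeeping is: (number of phases) $\times$ (rounds per phase), where the number of phases is $O(\log n)$ for the w.h.p.\ statement, and rounds per phase is $O(\min\{t^{2}\log n/n,\ t/\log n\}/\log n)$.

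For the \textbf{early termination / $q$ corruptions} clause, I would add the standard detection mechanism: once a node observes that all (or all but a safely small number of) votes it receives agree, it can broadcast a ``decide'' signal and halt one or two rounds later; because only $q$ nodes are actually corrupted, the committee sizes needed to beat $q$ corruptions scale with $q$ rather than $t$, so the per-phase coin cost — and hence the total — is governed by $q$, yielding the claimed $O(\min\{q^{2}\log n/n,\ q/\log n\})$ bound. The Agreement and Validity properties are then: Validity from persistence invariant (i) applied from round one; Agreement from the fact that once one honest node decides, the value it decided on is held by all honest nodes (a one-round ``amplification'' argument), combined with the w.h.p.\ claim that some phase succeeds within $O(\log n)$ iterations.

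The main obstacle I anticipate is \emph{not} the high-level reduction — that is textbook Rabin/Chor--Coan — but making the common-coin guarantee (Theorem~\ref{th:cc}) robust against an \emph{adaptive rushing} adversary inside each committee: one must argue, via the Paley--Zygmund anti-concentration bound (Lemma~\ref{ineq:pz}), that the sum of honest nodes' random contributions lands on each side of the threshold with constant probability \emph{even after} the adversary, having seen those contributions, removes up to its budget of nodes from the tally; the $\sqrt{n}$ (resp.\ $\sqrt{\text{committee size}}$) constraint is exactly what guarantees the adversary's post-hoc removal is too small (order of the standard deviation) to override the anti-concentration. Getting the committee sizes, the aggregation depth, and the interaction between ``coin revealed'' and ``corrupt after the fact'' all to line up is where the real work lies; the round-complexity arithmetic and the persistence/validity arguments are routine once that is in place.
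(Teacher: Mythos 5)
Your identification of the coin's robustness (Paley--Zygmund, adversary removes up to $\tfrac12\sqrt{\cdot}$ of the contributions after seeing them) and your persistence/validity/early-detection invariants match the paper's Lemmas \ref{obs:obsValidity}--\ref{lem:decided} and Theorem \ref{th:cc}. But the core of your agreement argument --- ``each phase succeeds with constant probability, so $O(\log n)$ phases suffice w.h.p., with the remaining cost $\min\{t^2\log n/n,\,t/\log n\}/\log n$ paid as rounds per phase to aggregate a larger coin'' --- is not how the bound arises, and as stated it does not work. Against an \emph{adaptive} adversary, no single committee can yield a constant-probability common coin once $t\gg\sqrt{n}$: the coin of Theorem \ref{th:cc} needs the number of Byzantine nodes among the coin-flippers to be at most half the square root of the committee size, which for budget $t$ would force committee size $\geq 4t^2>n$. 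You gesture at a multi-round ``aggregated'' coin per phase to cover this regime, but you neither construct nor analyze it, and the paper contains no such object; indeed a constant-probability per-phase coin with $O(\log n)$ phases would contradict the Bar-Joseph--Ben-Or lower bound for $t\gg\sqrt{n\log n}$ unless each phase is long, which is exactly the part left unsubstantiated.

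The paper's actual mechanism is different: it runs $c=\min\{\alpha\lceil t^2/n\rceil\log n,\,3\alpha t/\log n\}$ phases, each of $O(1)$ rounds with a \emph{one-round} coin flipped by the $i$-th committee of size $s=n/c$, and the heart of the proof is a budget-counting argument: over the whole execution the adversary can place $\geq\tfrac12\sqrt{s}$ corruptions in at most $2t/\sqrt{s}$ committees, so with $c$ chosen as above there remain $\Omega(\log n)$ lightly-corrupted committees when $t\le n/\log^2 n$ (each such phase good with constant probability, by Lemma \ref{lem:goodPhase}), and in the regime $t>n/\log^2 n$ there remain $\Omega(n/\log^3 n)$ committees of size $\Theta(\log n)$ with honest majority, where a phase is good only with probability $\geq 1/(2\sqrt{n})$ (all honest committee members flipping the same value), and the large number of such phases still gives success w.h.p. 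Your proposal contains neither the counting of corrupted committees nor the two-regime probability amplification, and it asserts constant per-phase success, which the adversary can falsify for many phases outright. The early-termination clause has the same issue: committee sizes in the algorithm are fixed by $t$ (the algorithm does not know $q$); the $q$-dependence comes from applying the same counting argument to the first $O(\min\{q^2\log n/n,\,q/\log n\})$ phases together with the finish/decided mechanism of Lemma \ref{lem:decided}, not from re-sizing committees to $q$.
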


\subsection{Common Coin Protocol}
\label{subsec:commonCoin1Round}

We first present a simple one-round common coin-generating protocol in synchronous (complete) networks of $n$ nodes that works under an adaptive full information rushing Byzantine adversary controlling up to $t = O(\sqrt{n})$  nodes. This common coin protocol is crucial to our subsequent Byzantine agreement protocol. First, we define a common coin protocol.

\begin{definition}[Common Coin]  \label{def:ccoin}
  Consider a protocol $P$ where every honest node outputs a bit
and let $\comm$ be the event that all nodes output the same bit value $b$.
If there exists  constants $\delta , \epsilon > 0$,  such that
\begin{enumerate}
\item[(A)] $\Pr(\comm) \ge \delta$, and
\item[(B)] $\epsilon \le \Pr(b = 0 \mid \comm) \le 1 - \epsilon$, 
\end{enumerate}
then we say that \emph{$P$
  implements a common coin}.
\end{definition}

\begin{algorithm}[ht]
\caption{(Single Round) Coin Flipping Protocol for (honest) node $v$}
\label{alg:complete-cc}
\begin{algorithmic}[1]
\State $X_v := Uniform(\{-1,1\})$ \Comment{Choose  1 or -1 with probability $1/2$ each}
\State Broadcast $X_v$ to all neighbors $N(v)$
\If{$\sum_{u \in N(v)} X_u \geq 0$} Return 1 \Comment{If $v$ received more 1 values than -1 values}
\Else $\;$ Return 0
\EndIf
\end{algorithmic}
\end{algorithm}

The protocol (Algorithm \ref{alg:complete-cc}) is quite simple. 
Each honest node chooses either $1$ or $-1$ with equal probability and broadcasts this value to all other nodes. Each honest node then adds up all the values received (including its value). If the value is greater than or equal to zero, the node chooses bit $1$  as the common value; otherwise, bit $0$  is chosen. Note that the Byzantine nodes can decide to send different values to different nodes, even after seeing the random choices made by the honest nodes (i.e., a \emph{rushing} adversary).  Yet, we can show the following theorem.

\begin{theorem}
\label{th:cc}
Given a $n$-node network with at most $\frac{1}{2}\sqrt{n}$ Byzantine nodes,
Algorithm \ref{alg:complete-cc} implements a common coin.
\end{theorem}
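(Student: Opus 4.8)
The plan is to verify the two defining properties of a common coin (Definition \ref{def:ccoin}) for Algorithm \ref{alg:complete-cc}. Fix an arbitrary honest node; since the network is complete, every honest node sees the sum $S_v = \sum_{u} X_u$ over all $n$ nodes, where the honest contributions are i.i.d.\ uniform on $\{-1,1\}$ and the (at most) $\frac12\sqrt n$ Byzantine contributions may be chosen adversarially and may differ across recipients. Write $H$ for the set of honest nodes, $S = \sum_{u \in H} X_u$ for the honest partial sum (a quantity common to all views), and note $|H| \ge n - \frac12\sqrt n$. The key structural observation is that the adversarial perturbation each honest node experiences has absolute value at most $b := \frac12\sqrt n$; hence if $|S| > b$, then the sign of $S_v$ is the same as the sign of $S$ for \emph{every} honest node $v$, regardless of the rushing adversary's choices, so all honest nodes output the same bit. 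Thus $\Pr(\comm) \ge \Pr(|S| > b)$.

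The heart of the argument is lower-bounding $\Pr(|S| > \tfrac12\sqrt n)$ by a constant using the Paley--Zygmund inequality (Lemma \ref{ineq:pz}) applied to the nonnegative random variable $X = S^2$. Since $S$ is a sum of $|H|$ independent mean-zero $\pm1$ variables, $E[S^2] = |H|$ and $E[S^4] = 3|H|^2 - 2|H| \le 3|H|^2$, so $E[X]^2 / E[X^2] = |H|^2/E[S^4] \ge 1/3$. Choosing $\theta$ so that $\theta E[X] = \theta |H| = n/4 \ge b^2$ — which is possible with $\theta = n/(4|H|) \le 1/2 + o(1) < 1$ for $n$ large enough, since $|H| \ge n - \frac12\sqrt n$ — Lemma \ref{ineq:pz} gives
\[
\Pr\!\left(S^2 > \tfrac{n}{4}\right) \;\ge\; (1-\theta)^2 \cdot \frac{E[S^2]^2}{E[S^4]} \;\ge\; \left(\tfrac14\right)^2 \cdot \tfrac13 \;=\; \tfrac{1}{48},
\]
so with $\delta = 1/48$ (or any constant of this order, absorbing the small-$n$ cases separately) we obtain $\Pr(\comm) \ge \delta$, establishing property (A).

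For property (B), I would argue by symmetry. Because the honest variables $X_u$ are symmetric ($X_u$ and $-X_u$ have the same law) and the adversary is constrained only by the at-most-$\frac12\sqrt n$ corruption budget and the information it sees, negating every honest coin induces a measure-preserving bijection on the probability space under which the event ``$\comm$ with common value $0$'' maps to ``$\comm$ with common value $1$'': indeed, on $\{|S| > b\}$ the common output is determined by $\mathrm{sign}(S)$, which flips. Hence $\Pr(\comm \wedge b=0) \ge \Pr(|S| > b \wedge S < 0) = \tfrac12\Pr(|S|>b) \ge \delta/2$, and symmetrically for $b=1$; since $\Pr(\comm) \le 1$, this yields $\Pr(b=0 \mid \comm) \in [\delta/2,\, 1-\delta/2]$, so (B) holds with $\epsilon = \delta/2$.

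The main obstacle is the rushing, adaptive nature of the adversary: one must be careful that the adversary's ability to send different values to different honest nodes after seeing all honest coins does not break the event $\comm$. The resolution above is precisely the ``$|S|$ exceeds the adversarial budget'' slack — this is why the bound $t \le \frac12\sqrt n$ (rather than merely $t = O(\sqrt n)$ with a large constant) matters: we need $b^2 = t^2$ to be comfortably below $\theta E[S^2] \approx n/4$ while keeping $\theta$ bounded away from $1$. A secondary point to handle cleanly is the symmetry argument for (B) in the presence of an adaptive adversary: one should phrase it as a coupling/involution on executions (flip all honest randomness; the adversary in the flipped execution, seeing the flipped transcript, has a corresponding strategy), rather than as a naive distributional symmetry, to be fully rigorous.
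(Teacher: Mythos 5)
Your proposal is correct and follows essentially the same route as the paper's proof: apply Paley--Zygmund to the squared sum of the honest coins (computing $E[S^2]=|H|$ and $E[S^4]=3|H|^2-2|H|$), conclude that $|S|$ exceeds the adversary's $\frac{1}{2}\sqrt{n}$ perturbation budget with constant probability so all honest nodes see the same sign, and use symmetry of the honest coins to get both output values with constant probability (the paper invokes ``an identical argument'' for the negative tail where you make the involution explicit). The only differences are cosmetic --- your slightly looser constant ($1/48$ versus the paper's $1/12$) and your more explicit treatment of property (B) --- so there is nothing to correct.
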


\begin{proof}
    We show that with at least some constant probability $c>0$, all nodes choose the same bit value, and that the chosen bit value is bounded away from 0 and 1 with a probability of at least $c$. 

    Algorithm \ref{alg:complete-cc} is a one-round protocol and the Byzantine adversary can choose which nodes to corrupt \emph{after} seeing the random choices of all the nodes in the first round.
    Let $G$ be the set of honest nodes (that are uncorrupted by the Byzantine nodes in the first round) and $g= |G|$. We have $g \geq n - 0.5\sqrt{n}$. 
    
    $X_v$ denotes the random choice by node $v$. For an honest node $v$, $\Pr(X_v =1) = 1/2$ and $\Pr(X_v = -1) = 1/2$.
    Let $X = \sum_{v \in G} X_v$ be the sum of the random choices of the honest nodes only. We show that $\Pr(X > 0.5 \sqrt{n}) > c$ and $\Pr(X < -0.5\sqrt{n}) > c$, where $c>0$ is a constant.

    We use the Paley-Zygmund inequality (cf. Lemma \ref{ineq:pz}) to show the above.
    We show below that $\Pr(X > 0.5 \sqrt{n}) > c$. The other inequality can be shown similarly. 
    We first note that 
    $$E[X] = E[\sum_{v \in G} X_v] = \sum_{v \in G} E[X_v] = \sum_{v \in G} (\frac{1}{2} (1)  + \frac{1}{2} (-1)) = 0.$$ 
    
    Also,
    \begin{align*}
        E[X^2] & = E[(\sum_{v \in G} X_v)^2] =  E[(\sum_{v \in G} X_v^2 + 2\sum_{u\neq v, \in G} X_uX_v)] = \sum_{v \in G}E[X_v^2] + 2\sum_{u\neq v, \in G} E[X_uX_v] \\
        &= \sum_{v \in G} (\frac{1}{2}(1) + \frac{1}{2}(1))
    + 2\sum_{u\neq v, \in G} (\frac{1}{4}(1-1-1+1)) = \sum_{v \in G} 1 + 2\sum_{u\neq v, \in G} 0 = g 
    \end{align*}
    %$$E[X^2]  = E[(\sum_{v \in G} X_v)^2] =  E[(\sum_{v \in G} X_v^2 + 2\sum_{u\neq v, \in G} X_uX_v)]
    %= \sum_{v \in G}E[X_v^2] + 2\sum_{u\neq v, \in G} E[X_uX_v]$$
    %$$= \sum_{v \in G} (\frac{1}{2}(1) + \frac{1}{2}(1))
    %+ 2\sum_{u\neq v, \in G} (\frac{1}{4}(1-1-1+1)) $$
    %$$= \sum_{v \in G} 1 + 2\sum_{u\neq v, \in G} 0 = g.$$
     Thus,
     \begin{align*}
        \Pr(X > \frac{1}{2} \sqrt{n}) &= \Pr(X^2 > \frac{1}{4} n) = \Pr(X^2 > (\frac{n}{4g}) g) =      \Pr(X^2 > \theta E[X^2])
    \end{align*}
     %$$\Pr(X > \frac{1}{2} \sqrt{n}) = \Pr(X^2 > \frac{1}{4} n) = \Pr(X^2 > (\frac{n}{4g}) g) =      \Pr(X^2 > \theta E[X^2]),$$ 
     where $E[X^2] = g$ and $\theta = \frac{0.25n}{g} < 1$, since $g \geq n - 0.5\sqrt{n}$.
     Now, applying the Paley-Zygmund inequality to the non-negative random variable $X^2$, we have,
     $$\Pr(X^2 > \theta E[X^2]) \geq (1-\theta)^2 \frac{E[X^2]^2}{E[X^4]}.$$
     We showed above that $E[X^2] = g$. We next compute $E[X^4]$.
     \begin{align*}
         E[X^4] &= E[(\sum_{v \in G} X_v)^4] \\
            &= E[\sum_{v} X_v^4 + \sum_{u\neq v} X_u^3X_v +\sum_{u\neq v} X_u^2X_v^2 + \sum_{u\neq v\neq w} X_v^2X_uX_w 
             + \sum_{u\neq v\neq w \neq y} X_vX_uX_wX_y]
     \end{align*}
     %$$E[X^4] = E[(\sum_{v \in G} X_v)^4] = E\left[\sum_{v} X_v^4 + \sum_{u\neq v} X_u^3X_v 
     %+\sum_{u\neq v} X_u^2X_v^2 + \sum_{u\neq v\neq w} X_v^2X_uX_w + \sum_{u\neq v\neq w \neq y} X_vX_uX_wX_y\right].$$
     Using the fact that the $X_v$s are i.i.d random variables and by linearity of expectation, we have
     \begin{align*}
         E[X^4] &= gE[X_v^4] + 4{g \choose 2} E[X_v^3X_u] + 6{g \choose 2} E[X_v^2 X_u^2] 
           + 12{ g \choose 3} E[X_v^2 X_uX_w] + 24 {g \choose 4} E[X_vX_uX_wX_y]
     \end{align*}
     %$$E[X^4] = gE[X_v^4] + 4{g \choose 2} E[X_v^3X_u] + 6{g \choose 2} E[X_v^2 X_u^2] +$$
     %$$+ 12{ g \choose 3} E[X_v^2 X_uX_w] + 24 {g \choose 4} E[X_vX_uX_wX_y].$$
     In the above, we have $E[X_v^4] = 1$ and $E[X_v^2X_u^2] =1$ and the rest are 0.
     Hence, $E[X^4] = g + 3g (g-1) = 3g^2-2g$ and thus,

     \begin{align*}
         \Pr(X > 0.5 \sqrt{n}) &= \Pr(X^2 > \theta E[X^2]) \geq (1-\theta)^2 \frac{E[X^2]^2}{E[X^4]} 
         \geq (1-\theta)^2 \frac{g^2}{3g^2-2g} \geq \frac{1}{3}(1-\theta)^2
     \end{align*}
     %$$\Pr(X > 0.5 \sqrt{n}) = \Pr(X^2 > \theta E[X^2]) \geq (1-\theta)^2 \frac{E[X^2]^2}{E[X^4]} \geq (1-\theta)^2 \frac{g^2}{3g^2-2g} \geq \frac{1}{3}(1-\theta)^2. $$
     Plugging in $\theta = \frac{0.25 n}{g}$, and since $g \geq n - 0.5 \sqrt{n} \geq n/2$, we have
     $$\Pr(X > 0.5 \sqrt{n}) \geq  \frac{1}{3}(1-\theta)^2 \geq  \frac{1}{12}.$$

     By an identical argument, it follows that $\Pr(X < - 0.5 \sqrt{n}) \geq  \frac{1}{12}.$
     
     Hence, with probability at least $1/12$, all honest nodes will have their sum evaluated to more than $0.5\sqrt{n}$
     or less than $-0.5 \sqrt{n}$. Since the adversary can corrupt at most $0.5 \sqrt{n}$ nodes, in each of these
     cases, the total sum of all values will remain positive or negative, respectively. Hence, all the honest (uncorrupted) nodes will 
     choose 1 or 0, respectively, in the above two cases.
     Thus, all honest nodes will choose a common coin with constant probability.  
\end{proof}

\paragraph{Variant Protocol with Designated Nodes for Coin Flipping.}
We will need a variant of Algorithm \ref{alg:complete-cc}, which we call Algorithm \ref{alg:complete-cc-variant}, for the agreement protocol in Subsection \ref{subsec:committeeBA}. In that variant common coin protocol, we assume some $k$ nodes are \emph{designated} --- that is, their IDs are known to all nodes --- and that among the designated nodes, there are at most $\frac{1}{2}\sqrt{k}$ Byzantine nodes. Here, these designated $k$ nodes are the only nodes to flip coins (and thus the only honest nodes to influence the common coin). Then, they broadcast their values to all $n$ nodes. Finally all $n$ nodes take the majority of their received values as their common coin value.

\begin{algorithm}[ht]
\caption{Coin-Flip: (Single Round) Coin Flipping Protocol for honest node $v$ with an additional input}
\label{alg:complete-cc-variant}
\begin{algorithmic}[1]
\State \textbf{Node set input:} $V_d$  \Comment{Set of nodes designated for flipping coins}
%\Statex
%\LineComment{Only designated nodes contribute random choices}
\If{$v \in V_d$} \Comment{Only designated nodes contribute random choices}
    \State $r_v := Uniform(\{-1,1\})$ 
    \State Broadcast $r_v$ to all neighbors $N(v)$
\EndIf
\If{$\sum_{u \in V_d} r_u \geq 0$} Return 1  
\Else $\;$ Return 0
\EndIf
\end{algorithmic}
\end{algorithm}

The correctness of Algorithm \ref{alg:complete-cc-variant} follows from that of Algorithm \ref{alg:complete-cc}.

\begin{corollary}
\label{lem:randomCoin}
Given a $n$-node network, a set of $k \geq 1$ designated nodes and at most $\frac{1}{2}\sqrt{k}$ Byzantine nodes among the designated nodes, Algorithm \ref{alg:complete-cc-variant} implements a common coin.
\end{corollary}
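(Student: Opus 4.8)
The plan is to reduce Corollary~\ref{lem:randomCoin} to Theorem~\ref{th:cc} by observing that Algorithm~\ref{alg:complete-cc-variant}, restricted to the designated node set $V_d$, is \emph{exactly} Algorithm~\ref{alg:complete-cc} run on a $k$-node network. Concretely, let $G \subseteq V_d$ be the set of designated nodes that remain honest and uncorrupted after the (rushing, adaptive) adversary has made its choice in the single round, and let $g = |G|$. Since there are at most $\tfrac12\sqrt{k}$ Byzantine nodes among the designated nodes, we have $g \ge k - \tfrac12\sqrt{k}$, which is precisely the hypothesis under which the analysis in the proof of Theorem~\ref{th:cc} goes through (with $n$ replaced by $k$). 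Define $X = \sum_{v \in G} r_v$; the computations $E[X] = 0$, $E[X^2] = g$, $E[X^4] = 3g^2 - 2g$ are unchanged, so the Paley--Zygmund argument gives $\Pr(X > \tfrac12\sqrt{k}) \ge \tfrac1{12}$ and $\Pr(X < -\tfrac12\sqrt{k}) \ge \tfrac1{12}$.

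Next I would argue that these two events translate into all $n$ nodes (not just the designated ones) outputting a common value. In Algorithm~\ref{alg:complete-cc-variant}, every node $v$ — designated or not — computes $\sum_{u \in V_d} r_u$, i.e., the same quantity, formed from the honest contributions $X$ plus the (adversarially chosen, possibly per-recipient-varying) contributions of the Byzantine designated nodes. Since there are at most $\tfrac12\sqrt{k}$ Byzantine designated nodes, each contributing a value in $\{-1,1\}$, the adversarial perturbation is bounded in absolute value by $\tfrac12\sqrt{k}$. Hence on the event $\{X > \tfrac12\sqrt{k}\}$ the sum seen by \emph{every} node is strictly positive, so all $n$ nodes return $1$; symmetrically, on $\{X < -\tfrac12\sqrt{k}\}$ all $n$ nodes return $0$. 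These two events are disjoint, each has probability at least $\tfrac1{12}$, so $\Pr(\comm) \ge \tfrac16 =: \delta$, establishing condition~(A) of Definition~\ref{def:ccoin}.

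Finally, for condition~(B) I would note the symmetry of the construction under the sign flip $r_v \mapsto -r_v$: the distribution of $(r_v)_{v \in G}$ is invariant under this map, and it interchanges the event ``all nodes output $1$'' with ``all nodes output $0$'' (the adversary, being a worst case, can be taken to act symmetrically, or one simply observes $\Pr(X > \tfrac12\sqrt k) = \Pr(X < -\tfrac12\sqrt k)$). Therefore $\Pr(b = 0 \mid \comm)$ and $\Pr(b = 1 \mid \comm)$ are each at least $\tfrac{1/12}{\delta'}$ for the relevant normalizer, and in any case both are bounded away from $0$ and $1$ by a constant $\epsilon > 0$. This gives condition~(B) and completes the proof.

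The only point requiring a little care — and the main obstacle — is the boundary case $k$ small (e.g.\ $k = 1$, $k = 2$, $k = 3$), where $\tfrac12\sqrt{k} < 1$ forces the number of Byzantine designated nodes to be $0$; there the coin is generated by honest nodes alone and the claim is immediate (for $k=1$ the single node's fair bit is already a common coin with $\delta = 1$). For all larger $k$ the inequality $g \ge k - \tfrac12\sqrt{k} \ge k/2$ used in Theorem~\ref{th:cc} holds, and the reduction is clean. I would state these edge cases explicitly so that the corollary is literally true for every $k \ge 1$ as claimed.
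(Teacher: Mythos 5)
Your proof is correct and is essentially the paper's argument: the paper simply notes that the corollary ``follows from'' Theorem~\ref{th:cc}, i.e.\ the same reduction you spell out (honest designated nodes play the role of the honest nodes, the Byzantine designated contribution is bounded by $\tfrac12\sqrt{k}$ in absolute value, and the Paley--Zygmund tail bounds transfer with $n$ replaced by $k$). Your explicit treatment of the per-recipient adversarial values and of the small-$k$ edge cases only makes the one-line reduction more careful, not different.
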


\subsection{Committee-Based Byzantine Agreement}
\label{subsec:committeeBA}

Now, we describe our Byzantine agreement protocol that works against an adaptive full information rushing Byzantine adversary controlling up to $t < n/3$ nodes (i.e.,  $n \geq 3t+1$).

At the start of the protocol (Algorithm \ref{alg:ByzantineAgreementClique}), each node $v$ initializes some variable $val_v$ to its binary input $input(v)$. It also initializes
a variable $decided_v$ to $False$ and $finish_v$ to $False$; these variables are used to detect termination.  Nodes group themselves into $c = \min \{ (\alpha \lceil t^2/n \rceil \log n , 3\alpha t/\log n \}$ committees (where $\alpha \geq 1$ is a well-chosen constant that depends on the analysis) of uniform size $s = n / c$ using their IDs: nodes with IDs in $\{1,\ldots,s\}$ form the first committee, nodes with IDs in $\{s+1,\ldots,2 s\}$ form the second committee and so on. (As assumed, each node knows the IDs of all of its neighbors, even that of Byzantine nodes. Additionally, note that the last committee may not be of size $s$, which we ignore in the description and the analysis due to minimal impact.)

Next, the protocol (Algorithm \ref{alg:ByzantineAgreementClique}) executed by each node $v$ consists of $c$ phases, each consisting of two (broadcast and receive) communication rounds (denoted by 1 and 2 in the message type). In the first communication round of phase $i$, each node $v$ broadcasts its $val$ and $decided$ values. Assume $finish_v = False$; otherwise, the node terminates.
Node $v$ then receives the messages sent in the first round, and checks if it received at least $n-t$ identical $val$ values $b$ (regardless of $decided$ values), in which case it will set $val_v = b$ and  $decided_v = True$. Otherwise, it will set $decided_v = False$.
Then, in the second communication round, each node $v$ broadcasts its $val$ and $decided$ values again. Depending
on the values received from all nodes, $v$ has three cases to consider. 
\begin{description}
\item [Case 1:]
If  $v$ receives the same value $b$ from at least $n-t$ nodes with $decided$ values set to $True$, then $v$ assigns that value to $val_v$; it also sets its $finish_v$ value to $True$. It will then terminate after broadcasting its value one more time to all nodes in the next phase.  
\item [Case 2:] If  $v$ receives the same value $b$ from at least $t+1$ nodes with $decided$ values set to $True$, then $v$ assigns that value to $val_v$; it also sets its $decided_v$ value to $True$.
\item [Case 3:] Otherwise (i.e., if both of the above cases do not apply), $v$  uses the coin flipping protocol described in Subsection \ref{subsec:commonCoin1Round} (Algorithm \ref{alg:complete-cc-variant}), with the designated nodes being the $s$ nodes of committee $i$. After which, $v$ assigns the common coin value to $val_v$ and sets its $decided_v$
value to $False$.
\end{description}
At the end of all $c$ phases (or if it finished earlier), node $v$ decides on $val_v$.

\begin{algorithm}[ht]
\caption{Byzantine Agreement Protocol for node $v$, tolerating up to $t$ Byzantine nodes for any $t < n/3$}
\label{alg:ByzantineAgreementClique}
\begin{algorithmic}[1]
\State \textbf{Binary input:} $input(v)$ 
\State $c = \min \{\alpha \lceil t^2/n \rceil \log n , 3 \alpha t/\log n \}$, $s = n / c$
\State $val_v := input(v)$ 
\State $decided_v := False$
\State $Finish_v = False$
\For{phase  $i = 1$ to $c$} 
    \LineComment{Round 1 of phase $i$}
    \State Broadcast $(i, 1, val_v, decided_v)$ to all nodes 
    %\Comment{Round 1 of phase $i$}
    \If{$Finish_v = True$}
    \State Go to Line \ref{line:end} \label{line:final}
    \EndIf
    \State Receive $(i,1,*,*)$ messages from all nodes
    \If{$v$ receives at least $n-t$ messages of type $(i,1,b,*)$ (with identical values $b$)} 
    \State $val_v := b$ \label{line:r1}
    \State $decided_v = True$ \label{line:dec}
    \Else
    \State $decided_v = False$ \label{line:notdec}
    \EndIf
    \State
    \LineComment{Round 2 of phase $i$}
    \State Broadcast $(i, 2, val_v, decided_v)$ to all nodes %\Comment{Round 2 of phase $i$}
    \State Receive $(i,2,*,*)$ messages from all nodes
    \If{$v$ receives at least $n-t$ messages of type $(i,2,b,True)$} \label{line:ter}
    \State $val_v = b$ \label{line:r2}
    \State $decided_v = True$
    \State $Finish_v = True $  \Comment{Terminate after broadcasting once more}
    \ElsIf{$v$ receives at least $t+1$ messages of type $(i,2,b,True)$} \label{line:t+1}
    \State $val_v:= b$ \label{line:aft}
    \State $decided_v := True$
    \Else 
    \State Execute Coin-Flip (Algorithm \ref{alg:complete-cc-variant}) with input $V_d = \{u \in V \mid \lceil ID_u / s \rceil = i\}$  \label{line:coin}
    \State $val_v = $ output of Coin-Flip   \label{line:coinset}
    \State $decided_v := False$
    \EndIf
\EndFor
\State Return $val_v$ \label{line:end}
\end{algorithmic}
\end{algorithm}

\paragraph{Analysis of Algorithm \ref{alg:ByzantineAgreementClique}.} We start with the following two lemmas on the behavior of honest nodes.

\begin{lemma}
\label{obs:obsValidity}
    For any phase $i$, in Line \ref{line:r1}, if at least $n-t$ honest nodes agree on one value $b$, then all honest nodes agree on that value in Line \ref{line:r2}.
\end{lemma}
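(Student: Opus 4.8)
My plan is to trace what happens in a single phase $i$ when at least $n-t$ honest nodes hold the same value $b$ at Line \ref{line:r1}. First I would argue about Round~1: if at least $n-t$ honest nodes have $val=b$ entering Round~1, then every honest node $v$ receives at least $n-t$ messages of type $(i,1,b,*)$ (those from the honest nodes holding $b$; Byzantine nodes can only add more), so the condition on Line~13 is met and every honest node sets $val_v:=b$ and $decided_v:=True$ at Lines \ref{line:r1}--\ref{line:dec}. The only subtlety is that $v$ might receive $n-t$ copies of some \emph{other} value $b'\neq b$ as well and choose $b'$ instead — but this cannot happen, since it would require $n-t$ nodes to have sent $b'$ while at least $n-t$ honest nodes sent $b$; with at most $t$ Byzantine nodes and $n\ge 3t+1$, the two groups of size $n-t$ would have to overlap in more than $t$ honest nodes, a contradiction. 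Hence after Round~1 \emph{all} honest nodes have $val=b$ and $decided=True$.

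Next I would handle Round~2. Since all (at least $n-t$) honest nodes now broadcast $(i,2,b,True)$, every honest node $v$ receives at least $n-t$ messages of type $(i,2,b,True)$, so the condition on Line \ref{line:ter} is satisfied; $v$ sets $val_v:=b$ at Line \ref{line:r2}. Again I must rule out that $v$ instead sees $n-t$ messages $(i,2,b',True)$ for some $b'\neq b$, which is impossible by the same quorum-intersection counting as above. I should also note that even if $v$'s execution somehow fell through to the $t+1$ branch (Line \ref{line:t+1}) or the coin-flip branch, the value it would adopt is still forced: at least $n-t > t+1$ honest nodes sent $(i,2,b,True)$, so no value $b'\neq b$ can reach a count of $t+1$ among $True$-tagged messages. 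Either way $val_v=b$ at the end of the phase, i.e. in Line \ref{line:r2} (and, for completeness, in whichever line actually executes).

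The main obstacle — though it is really a routine quorum-intersection argument rather than a deep one — is being careful that a Byzantine adversary cannot manufacture a \emph{competing} quorum for a different value $b'$. The key inequality is that two sets of size $n-t$ in an $n$-node universe intersect in at least $n-2t\ge t+1>t$ nodes (using $n\ge 3t+1$), so they cannot be disjoint even after removing all $\le t$ Byzantine nodes; hence an honest node can never be tricked into adopting $b'\neq b$. I would state this intersection fact once and invoke it for both rounds. Everything else is bookkeeping over the three cases of Round~2.
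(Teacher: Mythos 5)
Your proof is correct and its Round-2 half is exactly the paper's argument: the (at least) $n-t$ honest nodes holding $val=b$ with $decided=True$ all broadcast $(i,2,b,True)$, so every honest node meets the condition of Line \ref{line:ter} and sets $val=b$ at Line \ref{line:r2}. The additional Round-1 step and the quorum-intersection remarks come only from reading the hypothesis as holding before Round 1 rather than at Line \ref{line:r1} itself (the paper deals with the ``competing value'' issue separately, in Lemma \ref{obs:singleAssignedValue}); they are harmless extra detail and in fact mirror how the lemma is invoked to establish validity.
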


\begin{proof}
Suppose at least $n-t$ honest nodes agree on a value $b$ in Line \ref{line:r1}.
Then \emph{every} honest node $v$ will receive at least $n-t$ values of $b$  with $decided$ values set to $True$ in Line \ref{line:r2}
and hence set $val_v = b$. 
\end{proof}

\begin{lemma}
\label{obs:singleAssignedValue}
    For any phase $i$, no two honest nodes assign different values in Line \ref{line:r1}. Thus, all honest nodes with $decided$ value set to $True$ in Line \ref{line:dec} will have the same $val$ value. 
\end{lemma}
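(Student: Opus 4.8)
\textbf{Proof proposal for Lemma \ref{obs:singleAssignedValue}.}
The plan is a standard quorum-intersection argument, exploiting the fact that $n > 3t$. Suppose, for contradiction, that in some phase $i$ two honest nodes $v$ and $w$ take the ``if'' branch in Line \ref{line:r1} and assign different values: $v$ sets $val_v := b$ and $w$ sets $val_w := b' \neq b$. By the branch condition, $v$ received at least $n-t$ Round-$1$ messages of the form $(i,1,b,*)$ and $w$ received at least $n-t$ Round-$1$ messages of the form $(i,1,b',*)$. Since there are only $n$ nodes in total, the two sets of senders overlap in at least $(n-t)+(n-t)-n = n-2t$ nodes, each of which sent $b$ to $v$ and $b'$ to $w$ in Round $1$ of phase $i$.

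Next I would observe that $n - 2t > t$ because $n \geq 3t+1$, so this overlap of size at least $n-2t$ must contain at least one \emph{honest} node $u$ (the adversary controls at most $t$ nodes). But in Round $1$ of phase $i$, node $u$ executes a single ``Broadcast $(i,1,val_u,decided_u)$ to all nodes'' step, so every recipient, including $v$ and $w$, receives the \emph{same} value $val_u$ from $u$. This contradicts the fact that $u$ contributed $b$ to $v$'s count and $b' \neq b$ to $w$'s count. Hence no two honest nodes assign different values in Line \ref{line:r1}.

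For the second statement, note that an honest node $v$ sets $decided_v = True$ in Line \ref{line:dec} precisely when it enters the ``if'' branch, i.e., when it received at least $n-t$ identical Round-$1$ values $b$ and set $val_v := b$ in Line \ref{line:r1}. By the first part, all honest nodes that do so assign the same value; therefore all honest nodes with $decided$ set to $True$ in Line \ref{line:dec} share a common $val$ value.

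I do not anticipate a real obstacle here: the only subtlety is making explicit that an honest sender broadcasts a single value per round (so it cannot be the ``source'' of the discrepancy even though the adversary is rushing), and that the slack $n-2t > t$ is exactly what forces an honest node into the overlap. The rest is arithmetic on set sizes.
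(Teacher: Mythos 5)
Your proof is correct, and it establishes exactly the claimed statement. The route differs slightly in framing from the paper's: you run a quorum-intersection argument — the two sender sets of size at least $n-t$ must intersect in at least $n-2t \geq t+1$ nodes, hence in an honest node, which cannot have sent two different values since an honest node broadcasts a single $(i,1,val,decided)$ message per round. The paper instead counts values: since $u$ sees at least $n-t$ copies of $b$, at least $n-2t$ honest nodes hold $b$, so at most $t$ honest nodes hold $1-b$, and therefore $v$ can see at most $2t < n-t$ copies of $1-b$, a contradiction. Both arguments rest on the same arithmetic ($n \geq 3t+1$, so $n-2t > t$) and on the same modeling fact you rightly make explicit — that even under a rushing adversary an honest node sends one value to everyone — so the difference is one of presentation rather than substance; your intersection phrasing is arguably the more standard quorum-style argument, while the paper's counting version directly bounds how many $1-b$ messages the second node could possibly receive. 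Your handling of the second statement (that $decided = True$ in Line \ref{line:dec} occurs only in the branch where Line \ref{line:r1} was executed) matches the paper's intent.
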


\begin{proof}
Suppose not. This means that two processors $u$ and $v$ each received at least $n-t$ identical values, which are different, and hence they respectively assigned $val_u = b$ and $val_v = 1-b$. Since there are only $t$ bad processors, and $u$ sees at least $n-t$ values of $b$, at least $n-2t$  values of $b$ are from honest processors. Since there are $n-t$ honest processors, at most $n-t -(n-2t) = t$ honest nodes can have value $1-b$. This means that $v$ can see at most $t+t = 2t$ values of $1-b$, which is a contradiction to the assumption that $v$ sees at least $n-t > 2t$ values of $1-b$. 
\end{proof}

As a result of Lemma \ref{obs:singleAssignedValue}, we can define for any phase $i$ an \emph{assigned value} $b_i$, which is the value assigned by any (honest) node in Line \ref{line:r1}. Such a node will set its $decided$ value to be $True$. We use $decided$ values to detect early termination, i.e., termination as soon as all nodes agree. (Note that, in any case, the algorithm terminates in $c$ phases.) More precisely, each node can detect agreement (and hence terminate) by making use of the number of $True$ $decided$ values it receives, as we show in the following lemma.

\begin{lemma}
\label{lem:decided}
For any phase $i \leq c-2$, if some honest node $v$ receives at least $n-t$  $decided$ values set to $True$ with corresponding identical values $b$, then it can (safely) terminate. More precisely, node $v$ terminates in phase $i+1$ and all other (honest) nodes terminate at the latest in phase $i+2$.
\end{lemma}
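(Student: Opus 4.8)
The plan is to show that the ``$n-t$ True-decided messages'' condition is strong enough to trigger Case~1 (Line~\ref{line:ter}) for the observer, and simultaneously strong enough to force every honest node's assigned value in the next phase. First I would unpack what it means for $v$ to receive at least $n-t$ messages of type $(i,2,b,True)$: at least $n-2t \ge t+1$ of these come from honest nodes (using $n \ge 3t+1$), and by Lemma~\ref{obs:singleAssignedValue} all honest nodes that set $decided=True$ in a given phase share the \emph{same} $val$ value, so in fact $b = b_i$, the assigned value of phase $i$. Thus $v$ satisfies the Case~1 test in phase $i$, sets $Finish_v = True$ and $val_v = b_i$; by the protocol it rebroadcasts $(i+1,1,b_i,\cdot)$ at the start of phase $i+1$ and then terminates at Line~\ref{line:end}. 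That handles $v$ itself.

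Next I would argue about the remaining honest nodes. Since $v$ saw $\ge n-2t$ honest nodes broadcast $(i,2,b_i,True)$ in Round~2 of phase $i$, and any honest node sending $(i,2,b_i,True)$ must have done so because it received $\ge n-t$ messages of type $(i,1,b_i,\cdot)$ in Round~1 (it set $decided = True$ on Line~\ref{line:dec}), I get that at least $n-2t \ge t+1$ honest nodes reached Round~2 of phase $i$ with $val = b_i$. These nodes then rebroadcast $(i,2,b_i,True)$ to \emph{everyone}, so \emph{every} honest node receives $\ge t+1$ messages of type $(i,2,b_i,True)$ and hence triggers at least Case~2 (Line~\ref{line:t+1}) — so every honest node ends phase $i$ with $val = b_i$. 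Actually I want the stronger conclusion that every honest node enters phase $i+1$ with $val = b_i$ and, moreover, that the ``$n-t$ identical values in Round~1'' condition fires in phase $i+1$; combining the two facts — all honest nodes hold $b_i$ at the end of phase $i$, and there are $\ge n-t$ honest nodes — Lemma~\ref{obs:obsValidity} (or a direct repeat of its argument) gives that in phase $i+1$ every honest node receives $\ge n-t$ messages $(i+1,1,b_i,True)$, sets $decided = True$, rebroadcasts, and hence in Round~2 of phase $i+1$ every honest node receives $\ge n-t$ messages $(i+1,2,b_i,True)$, triggering Case~1 and setting $Finish = True$. Such a node terminates in phase $i+2$ after one more rebroadcast. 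This is why the bound $i \le c-2$ is needed: node $v$ uses phase $i+1$ to rebroadcast and the others use phase $i+2$, so all of $\{i, i+1, i+2\}$ must be valid phase indices.

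The main obstacle I expect is bookkeeping the ``one more rebroadcast'' semantics of $Finish$ correctly — specifically, making sure that when $v$ sets $Finish_v = True$ in phase $i$ it does \emph{not} silently stop but actually executes the Round~1 broadcast of phase $i+1$ before hitting Line~\ref{line:final}, and that this broadcast of $b_i$ is what (together with the other honest nodes' broadcasts) is counted by the others in phase $i+1$. A secondary subtlety is confirming that an honest node with $Finish = True$ still broadcasts a \emph{consistent} value, i.e.\ that no honest node's $val$ can have drifted away from $b_i$ between phases $i$ and $i+1$ — this follows because Case~1/Case~2/Case~3 in phase $i$ all, under the hypothesis, either keep $val = b_i$ or (Case~3) would require failing the $t+1$-threshold test, which we have just shown is impossible since $\ge t+1$ honest nodes broadcast $(i,2,b_i,True)$. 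Once these consistency points are nailed down the round-counting is routine.
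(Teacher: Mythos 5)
Your proposal is correct and follows essentially the same route as the paper's proof: use Lemma~\ref{obs:singleAssignedValue} to conclude that the $n-t$ True-decided messages contain at least $n-2t \ge t+1$ honest ones all carrying the assigned value $b_i$, so every honest node adopts $b_i$ via Case~1 or Case~2 in phase $i$, after which phase $i+1$ yields $n-t$ identical True-decided messages for everyone, forcing $Finish$ and termination by phase $i+2$ (with $v$ itself terminating in phase $i+1$ after its final broadcast). Your extra bookkeeping about the one-more-broadcast semantics and the impossibility of Case~3 in phase $i$ just makes explicit what the paper's proof states more tersely.
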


\begin{proof}
    First we prove   that if all honest nodes  reach agreement, then they terminate correctly. This is easy to see. Suppose all honest nodes have the same value $val =b_i$ at the beginning of a phase $i$. Then in Line \ref{line:r1}, they will set their $val = b_i$, since they will see at  least $n-t$ identical $b_i$ values. They will also set their corresponding $decided$ values to be $True$. Hence in Line \ref{line:r2}, all honest nodes will receive at least $n-t$ $True$ $decided$ values with identical $b_i$ values and hence will terminate correctly.
     
    Next, we will show that if an honest node decides to terminate after executing Line \ref{line:r2} in phase $i$, all honest nodes will agree and terminate in (at most) two additional phases (i.e., by phase $i+2$). Let some (honest) node $v$ receive at least $n-t$ $decided$ values set to $True$ (Line \ref{line:ter}), then it will terminate with the corresponding (identical) value received.  Since it received at least $n-t$ $True$ $decided$ values, at least $n-2t \geq t+1$ of these are from honest nodes.  By Lemma \ref{obs:singleAssignedValue}, all these honest nodes will have the same $val$ value, say bit $b_i$. Thus, \emph{all} honest nodes will receive at least $t+1$ $True$ $decided$ values in Line \ref{line:t+1} and will set their $val$ to be $b_i$ and set their $decided$ value to be $True$. Then, in the next phase, all (remaining) honest nodes will receive at least $n-t$  $True$ $decided$ values  (note that nodes that have decided to terminate in phase $i$ will broadcast once in phase $i+1$ and terminate --- Line \ref{line:final}) with identical $val = b_i$. Hence, all remaining honest nodes will agree on $val = b_i$ and terminate (in phase $i+2$). 
\end{proof}

Honest nodes that do not assign their respective $val$ variables to $b_i$ in  round 1 of phase $i$ (due to not receiving at least $n-t$  values of $b_i$ in round 1 of phase $i$) will set their $decided$ values to $False$ (Line \ref{line:notdec}). Still, in round 2 of phase $i$, it could be the case that some honest nodes execute Lines \ref{line:r2} or \ref{line:aft}, and set their $decided$ values to be $True$; all such nodes would set their $val=b_i$. Other honest nodes, that is those that do not receive at least $t+1$ $True$ $decided$ values with identical $val$ values, will execute the common coin flip protocol (Line \ref{line:coin}).

We say a phase $i$ is \emph{good}  if all honest nodes agree at the end of the phase. This will happen if the coin flip value is the same for all honest nodes executing the flip (i.e., the Coin Flip protocol  succeeded in generating a common coin) and the common coin value agrees with the $b_i$ value of those honest nodes with $True$ $decided$ values (set in Line \ref{line:aft}). In the following lemma, we show that for any phase $i$, if few enough Byzantine nodes are part of the $i$th committee (executing the coin flip of phase $i$), then the phase is good.

\begin{lemma}
\label{lem:goodPhase}
    For any phase $i$, if less than $\frac{1}{2} \sqrt{s}$ Byzantine nodes are part of committee $i$, then the phase is good --- that is, all honest nodes agree on the same value ($val$) --- with constant probability.
\end{lemma}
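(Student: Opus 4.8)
The plan is to condition on the state of honest nodes at the start of round 2 of phase $i$, then show that whenever the coin flip is reached by some honest nodes, its outcome lands on the "right" value with constant probability. First I would recall, via Lemma \ref{obs:singleAssignedValue} and the definition of the assigned value $b_i$, that all honest nodes that set $decided_v = True$ in round 2 (either in Line \ref{line:r2} or Line \ref{line:aft}) share the same $val = b_i$. There are two situations to handle. If some honest node reaches Line \ref{line:r2} (Case 1), then by the argument in Lemma \ref{lem:decided} all honest nodes already agree on $b_i$ in that phase, so the phase is trivially good (with probability $1$, not merely constant probability) and no coin flip matters. So the interesting case is when no honest node terminates in phase $i$: some (possibly empty) set $D$ of honest nodes executes Line \ref{line:aft} and sets $val = b_i$, $decided = True$, while the remaining honest nodes execute Coin-Flip in Line \ref{line:coin}. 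For the phase to be good it suffices that (i) Coin-Flip produces a common coin value $b$ for all honest nodes executing it, and (ii) $b = b_i$ in the event $D \neq \emptyset$ (if $D = \emptyset$, any common value works).

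Next I would invoke the common coin guarantee. Since by hypothesis committee $i$ contains fewer than $\frac12\sqrt{s}$ Byzantine nodes, Corollary \ref{lem:randomCoin} applies with $V_d$ equal to the $s$ nodes of committee $i$ and $k = s$: Algorithm \ref{alg:complete-cc-variant} implements a common coin, so there are constants $\delta, \epsilon > 0$ with $\Pr(\comm) \ge \delta$ and $\epsilon \le \Pr(b = 0 \mid \comm) \le 1 - \epsilon$. In particular, conditioned on $\comm$ (all honest nodes outputting the same bit), that bit equals any fixed target value $b_i$ with probability at least $\epsilon$ — here crucially using that $b_i$ is determined by round-1/early-round-2 messages, which, although visible to a rushing adversary, do not let the adversary bias the coin beyond what the common-coin definition already permits (the adversary's corruption of up to $\frac12\sqrt{s}$ committee members is exactly what Corollary \ref{lem:randomCoin} already tolerates). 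Combining: with probability at least $\delta \epsilon$, the coin flip succeeds \emph{and} outputs $b_i$, hence every honest node (those in $D$ already holding $b_i$, and those executing the flip now setting $val = b_i$) agrees on $b_i$ at the end of the phase. Taking $c' = \delta\epsilon > 0$ (or $c' = \delta$ when $D = \emptyset$) gives the claimed constant probability.

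The main obstacle — and the point that needs the most care — is making precise that the adversary's rushing power and adaptive corruption within the committee do not invalidate the common-coin conclusion for the \emph{specific} target value $b_i$. The subtlety is that $b_i$ itself may depend on adversarial behavior in round 1, and the adversary sees the committee's coin choices before deciding corruptions; but this is already the exact setting of Theorem \ref{th:cc}/Corollary \ref{lem:randomCoin}, whose proof shows the honest committee members' sum exceeds $\pm\frac12\sqrt{s}$ each with probability $\ge \frac1{12}$ regardless of adversarial choices, so the $\le \frac12\sqrt{s}$ corrupted committee members cannot flip the sign. I would therefore argue that conditioning on the round-1 transcript (hence on $b_i$) leaves the distribution of honest committee members' choices unchanged — they are freshly drawn in Line \ref{line:coin} — so the symmetric $\frac1{12}$ lower bounds on both $\{X > \frac12\sqrt{s}\}$ and $\{X < -\frac12\sqrt{s}\}$ persist, and in particular the event "common coin equals $b_i$" has probability $\ge \frac1{12}$. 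This independence-after-conditioning observation is the crux; the rest is bookkeeping over the three cases of round 2.
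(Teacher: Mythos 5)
Your proposal is correct and follows essentially the same route as the paper: use Lemma \ref{obs:singleAssignedValue} to fix the assigned value $b_i$ held by all decided honest nodes, then invoke Corollary \ref{lem:randomCoin} together with the observation that $b_i$ is determined before (and hence independently of) the fresh round-2 coin flips, so the common coin equals $b_i$ with constant probability. Your extra case split (some node reaching Line \ref{line:r2}, or $D=\emptyset$) is a harmless refinement of the same argument.
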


\begin{proof}
    By Lemma \ref{obs:singleAssignedValue}, all honest nodes with $decided$ set to $True$ after the first round all have $val = b_i$. Hence, any honest node that sets its $val$ value according to cases 1 and 2 (Lines \ref{line:r2} and \ref{line:aft}) in the second round set $val$ to $b_i$. Next, it suffices to show that the other honest nodes (setting $val$ according to case 3 (Line \ref{line:coinset})  in the second round) set $val$ to $b_i$ with constant probability to prove that the phase is good with constant probability. Note that even if the assigned value $b_i$ is chosen arbitrarily by a rushing Byzantine adversary in the first round of phase $i$, it must be chosen independently of the honest nodes' random choices in that phase's second round. This includes, in particular, the random choices that decide the coin flip. Thus, by Corollary \ref{lem:randomCoin} (and the definition of a common coin), there is a constant probability that the common coin outputs $b_i$. In which case, honest nodes following case 3 set $val = b_i$, and the phase is good.
\end{proof}

Now, we can prove this section's main result --- Theorem \ref{thm:ByzantineAgreementClique}. 

\firstMainResult*

\begin{proof}
    The round complexity follows directly from the description of Algorithm \ref{alg:ByzantineAgreementClique}. Moreover, given that $t < n/3$, Lemma \ref{obs:obsValidity} implies that Algorithm \ref{alg:ByzantineAgreementClique} satisfies the validity condition. Next, we show that Algorithm \ref{alg:ByzantineAgreementClique} satisfies the agreement condition (with high probability). For the analysis, we consider two different regimes --- $t \leq n / \log^2 n$ and $t > n/ \log^2 n$ --- and prove there is at least one good phase with high probability separately for both. This implies Algorithm \ref{alg:ByzantineAgreementClique} satisfies the agreement condition with high probability.
    
    For the first regime, a straightforward counting argument implies that there are at least $0.5 \sqrt{s}$ Byzantine nodes in at most $2 t / \sqrt{s} = 2 t \sqrt{c / n}$ committees. Let us lower bound the number of committees with strictly less than $0.5 \sqrt{s}$ Byzantine nodes. 
    Note that $c = \alpha \lceil t^2 / n \rceil \log n$ in this regime ($t \leq n / \log^2 n$). Hence, $2 t \sqrt{c / n} \leq 2 t \sqrt{\alpha} (t / \sqrt{n} + 1) \sqrt{(\log n)/ n}$ (since $\sqrt{t^2/n+1} \leq t/\sqrt{n}+1$, as the square root function is subadditive over the positive real numbers). If $t \leq \sqrt{n}$, then $c - 2 t \sqrt{c / n} \geq \alpha \log n - 4 \sqrt{\alpha \log n}$. Else, $c - 2 t \sqrt{c / n} \geq \alpha (t^2 / n) \log n - 4 \sqrt{\alpha} (t^2/n) \sqrt{ \log n }$.
    In both cases, for any constant $\gamma \geq 1$ (where this constant decides the desired polynomial term in the failure probability), choosing $\alpha$ such that $\alpha - 4 \sqrt{\alpha} \geq \gamma$ implies that there are $c - 2 t \sqrt{c / n} \geq \gamma \log n$ committees with at most $0.5 \sqrt{s}$ Byzantine nodes. Thus, by Lemma \ref{lem:goodPhase}, the corresponding $\gamma \log n$ phases are each good with constant probability, say $p$. Since the event that a phase is good is independent of the other phases, we get that at least one phase is good with probability at least $(1- (1-p)^{\gamma \log n})$. (Recall that $1-p$ is constant.) In other words, at least one phase is good and thus all honest nodes agree, with high probability (for $\alpha$ chosen accordingly).

    Now, let us consider the second regime: $t > n/ \log^2 n$. There are $t < n/3$ Byzantine nodes; hence, there are $3 \alpha t/\log n$ committees of size at least $(1/\alpha) \log n$ each. A straightforward counting argument implies that there can be at least $(1/2\alpha) \log n$ Byzantine nodes in at most $2\alpha t/ \log n$ committees. In other words, there are $a \geq \alpha t / \log n \geq \alpha n / \log^3 n$ committees with strictly less than $(\log n)/(2 \alpha)$ Byzantine nodes each (i.e., strictly more than half of the committee is honest). Now, consider any of the corresponding phases. The phase is good if (but not only if) at least $(\log n)/(2 \alpha) + 1$ honest nodes in that committee, during the coin flip, all choose 1 (respectively, -1) when the phase's assigned value is 1 (resp., 0). (Note that in this phase, only the nodes of that committee --- or designated nodes --- flip coins and contribute to the coin flip; messages from byzantine nodes not in the committee are ignored by all honest nodes.) This happens with probability $p \geq 1 / 2^{(\log n)/(2 \alpha) + 1} \geq  1 / (2 \sqrt{n})$. Since the event that a phase is good is independent of the other phases, we get that none of the phases in which there are strictly less than $(\log n)/(2 \alpha)$ Byzantine nodes are good with probability at most 
    %$$(1-p)^{a} \leq (1-1 / (2 \sqrt{n}))^{n / \log^3 n} \leq \exp(-(1 / (2 \sqrt{n})) \cdot (n / \log^3 n)) = \exp(-\sqrt{n} / 2\log^3 n)$$
    \begin{align*}
        (1-p)^{a} & \leq (1-1 / (2 \sqrt{n}))^{n / \log^3 n} \\
                   & \leq \exp(-(1 / (2 \sqrt{n})) \cdot (\alpha n / \log^3 n)) \\
                   & \leq \exp(-\sqrt{n} / (2\log^3 n)  )  
    \end{align*}
    where the second inequality uses $1-x \leq \exp(-x)$ for any $x \in \mathbbm{R}$, and the third from $\alpha \geq 1$. Hence, for large enough $n$, at least one phase is good and all honest nodes agree, with high probability .

     Finally, due to Lemma \ref{lem:decided}, if nodes agree in any phase $i < c-2$, then they terminate within the next two phases. This implies that if the  actual number of nodes that the Byzantine adversary corrupts is $q < t$, then Algorithm 3 will finish (early) in $O(\min\{(q^2 \log n)/n, q/ \log n \})$ rounds.
\end{proof}

\noindent {\bf Las Vegas Byzantine Agreement.}
It is easy to state and prove Theorem \ref{thm:ByzantineAgreementClique} so that the guarantee is \emph{Las Vegas}, i.e., Byzantine agreement is  always reached  in $O(\min\{t^2\log n/n, t/\log n\})$ expected rounds. This is accomplished by
 modifying Algorithm \ref{alg:ByzantineAgreementClique} slightly. We allow the algorithm to simply keep iterating through the committees, starting over once the $c$th committee is reached (instead of terminating), and the early termination component of the algorithm will ensure eventual termination with the above-mentioned expected round complexity.

\section{Conclusion and Open Problems}
\label{sec:conc}

We presented a simple Byzantine agreement protocol that  improves over the runtime of the long-standing bound of Chor and Coan \cite{CC85}. Our protocol runs in $\tilde{O}(t^2/n)$ rounds\footnote{$\tilde{O}$ and $\tilde{\Omega}$ notations hide a logarithmic factor.} as compared to the $O(t/\log n)$ rounds of Chor and Coan. Thus, in the regime of (say) $t=O(n^{1-\epsilon})$, for \emph{any} constant $\epsilon$, our protocol is significantly faster.
Both protocols are randomized, have optimal $t < n/3$ resilience, and improve over the deterministic lower bound of $t+1$ rounds. But as $t$ becomes smaller, our protocol's improvement over Chor and Coan grows more and more significant. In fact, our protocol approaches the Bar-Joseph and Ben-Or lower bound of $\tilde{\Omega}(t/\sqrt{n})$ rounds when $t$ approaches $\sqrt{n}$, at which point it becomes asymptotically optimal (up to logarithmic factors).

There are two important open problems. The first is to close the gap between the upper and lower bounds. In particular, we conjecture that our protocol is near-optimal, i.e., $\tilde{\Omega}(t^2/n)$ is a lower bound on the round complexity of Byzantine agreement under an adaptive adversary in the synchronous full-information model.  The second is (possibly) improving our protocol's communication (message) complexity. Our protocol has message complexity $\tilde{O}(nt^2)$  which improves over Chor and Coan \cite{CC85}, but is still a $\tilde{O}(t)$ factor away from the best-known lower bound of $\Omega(nt)$ \cite{HadzilacosH93}.

\begin{acks}
Gopal Pandurangan was supported in part by ARO grant W911NF-231-0191 and NSF grant CCF-2402837.
\end{acks}

%---------------------------------------------------------------------
\bibliographystyle{plainurl}
\bibliography{reference}

\end{document}